\documentclass[journal]{IEEEtran}
\ifCLASSINFOpdf
\else
\fi

\usepackage{flushend}
\usepackage{amsmath}
\usepackage{amsthm}
\usepackage{graphicx}
\usepackage{subfigure}
\usepackage{algorithmic}
\usepackage{algorithm}
\usepackage{bbm}
\usepackage{url}

\hyphenation{op-tical net-works semi-conduc-tor}

\begin{document}
%
\title{Mobile Instant Video Clip Sharing: \\Modeling and Enhancing View Experience}
%
%
%
\author{Lei~Zhang,~\IEEEmembership{Student~Member,~IEEE,}
        Feng~Wang,~\IEEEmembership{Member,~IEEE,}
        and~Jiangchuan~Liu,~\IEEEmembership{Senior~Member,~IEEE}

\thanks{L. Zhang and J. Liu are with the School of Computing Science, Simon Fraser University, Burnaby, BC, V5A 1S6, Canada (e-mail: lza70@cs.sfu.ca, jcliu@cs.sfu.ca).}
\thanks{F. Wang is with the Department of Computer and Information Science, The University of Mississippi, University, MS, USA, 38677 (e-mail: fwang@cs.olemiss.edu).}
        }
\markboth{Journal of \LaTeX\ Class Files,~Vol.~6, No.~1, January~2007}%
{Shell \MakeLowercase{\textit{et al.}}: Bare Demo of IEEEtran.cls for Journals}
%



\maketitle

\maketitle
\begin{abstract}
With the rapid development of wireless networking and mobile devices,
anytime and anywhere data access becomes readily available nowadays.
Given the crowdsourced content capturing and sharing, the preferred
content length becomes shorter and shorter, even for such multimedia data
as video. A representative is Twitter's Vine service, which, mainly targeting mobile users, enables them to create ultra-short video
clips and instantly post and share with their followers.
In this paper, we present an initial study on this new generation of
instant video clip sharing service enabled by mobile platforms and
explore the potentials towards its further enhancement. We closely investigate its unique mobile interface, revealing
the key differences between Vine-enabled anytime anywhere data access patterns and that of traditional counterparts.
We then examine the scheduling policy to maximize the user watching experience as well as the efficiency on the monetary and energy costs. We
show that the generic scheduling problem involves two subproblems, namely, pre-fetching scheduling and watch-time download scheduling, and
develop effective solutions towards both of them. The superiority of our solution is demonstrated by extensive trace-driven simulations. To the best of our knowledge, this is the first work on modeling and optimizing the instant video clip sharing on mobile devices.
\end{abstract}

%

\begin{IEEEkeywords}
Mobile Device, Video Clip, Instant Video Sharing, Scheduling.
\end{IEEEkeywords}

%
\IEEEpeerreviewmaketitle


\section{Introduction}

In the past decade, the user-generated content (UGC) available on the Internet (e.g., images, videos, micro-blogs, etc.)
has presented an explosive growth trend, which provides significant opportunities for both understanding how users utilize the Internet and enhancing their experience. Video is no doubt a dominant type of media in terms of content generation and sharing. Specifically, for online video sharing, we have witnessed a three-stage evolution with distinct characteristics. The first generation, represented by YouTube~\cite{benevenuto2008understanding, cha2007tube, figueiredo2011tube}, allows users to upload and watch videos directly on a number of video sharing sites (VSSes) over the Internet, in which videos are usually recommended by VSSes or found by search results via search engines. Later, such online social network services as Facebook and Twitter \cite{doman2014event, yan2014mining, yu2014twitter} emerged to directly connect people through cascaded relations, and information thus spreads much faster and more extensively. They greatly changed the earlier video access patterns, too, through proactively and efficiently sharing among friends the video links from external VSSes \cite{rodrigues2011word}. More recently, with the pervasive penetration of wireless mobile networks and the advanced development of smartphones and tablets, video clips can now be ubiquitously generated and accessed~\cite{song2014acceptability}. This leads to the emergence of a new generation video sharing services that focus primarily on capturing and sharing short video clips with mobile users; a representatives is Twitter's Vine\footnote{\url{https://vine.co/}} that was officially launched in the beginning of 2013.

Along with the evolution of video sharing services, users' video watching habits vary significantly: the users of the first generation video sharing services literally browse videos in the Internet through VSS portals or search engines, and in the second generation, the social ties between friends accelerate video propagation among users who have common interests. The ubiquitous mobile access, on which the third generation video sharing services base, allows users to watch videos much more easily and frequently, which in turn enhances the social ties. The migration towards mobile platforms also makes the content length of individual videos being shorter and shorter, from tens of minutes of YouTube video watching to several seconds of instant video clip watching (e.g., 6-second length limit of Vine videos). Instead of having fine-grained VCR controls on the video playback, mobile users typically browse and watch interesting contents by scrolling their mobile device's screen over tons of instant video clips, which are organized in playlists and played seamlessly as the user's viewpoint moves.

Maintaining high-quality user experience during video playback is crucial to all video sharing services, and it becomes an even more challenging task under the mobile scenarios. To put this into perspective, a YouTube video user can tolerate a start-up delay from several seconds to even tens of seconds (advertisements before the playback can be viewed as a kind of delay); an instant video clip itself however is much shorter (e.g., 6-second long for Vine), and a Vine video view can therefore hardly tolerate such typical start-up delays for buffering. Moreover, considering the huge amount of instant video clips consumed during the screen scrolling, the frequent interruptions of single video's playback may largely hurt the user's viewing experience for the whole playlist.

In this paper, we for the first time identify and characterize the unique watching behaviors of the new generation mobile video sharing services, namely, {\it batch view}, {\it passive view}, and {\it screen scrolling}. Using Vine as a representative, we closely investigate its unique mobile interface, revealing the key differences between Vine-enabled anytime anywhere data access patterns and that of traditional counterparts. We then examine the scheduling policy to maximize the user watching experience as well as the efficiency on the monetary and energy costs.  We
show that the generic scheduling problem involves two subproblems, namely, pre-fetching scheduling and watch-time download scheduling, and develop effective solutions towards both of them. We suggest novel popularity-based and wireless-aware pre-fetching for the instant video clips, as well as predicting the watching durations of the approaching videos by such user input gestures as {\em dragging} and {\em flinging}, and then dynamically adapting the schedule for real-time video downloading. Using extensive simulations driven by the real-world traces, we show that our solution can significantly improve the user watching experience while still keeping both the monetary and energy costs relatively low.

The rest of the paper is organized as follows: Section~\ref{sec:back} presents an overview of the new generation of mobile instant video clip sharing services, which reshape the way that mobile users watch videos, and discusses the motivation of our modeling work. In Section~\ref{sec:ps}, we formulate the generic scheduling problem. By dividing it into two subproblems and conquering them separately, we then propose a general solution for the generic problem. Section~\ref{sec:ev} evaluates our solution by extensive trace-driven simulations and a further discussion on other open issues is given in Section~\ref{sec:fd}. Finally, we conclude our paper in Section~\ref{sec:con}.

\section{Background and Motivation}
\label{sec:back}

We now take a close look at Vine, a representative of the new generation mobile instant video clip sharing services. The social tie in Vine is the typical {\em follower-followee} relationship, as in Twitter and other similar social network services, e.g., Instagram and Flickr. The users of Vine can follow others whom they are interested in, and then receive updates from them. They can also access the public channels to view promoted videos or videos on specific topics. A typical mobile Vine client has four pages: Home/Feed, Explore, Activity and Profile. As illustrated in Figure \ref{fig:intf}, a user can view, like, comment, and share (repost) recent posts from others that it follows in  the ``Home/Feed'' page; in the ``Explore'' page, popular/trending posts, as well as dedicated public channels, are highlighted, together with the interface for the user to search specific videos or people of interest. The other two pages are not designed for video watching: the ``Activity'' page acts as the notification center showing the recent events, and the ``Profile'' page allows a user to customize his/her personal settings.


\begin{figure}[tbp]
\centering
\subfigure[\small Home page]{
        \includegraphics[scale=0.17]{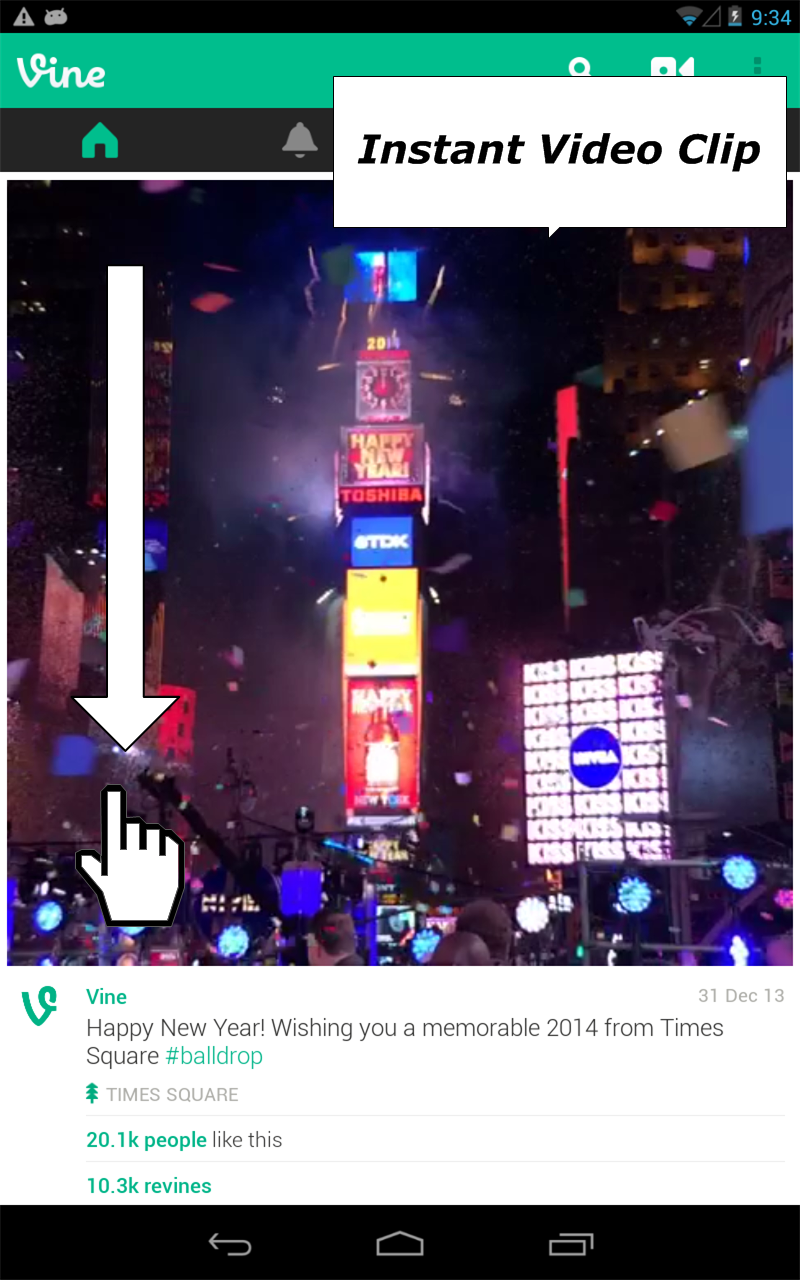}
\label{fig:intf1}
}
\subfigure[\small Explore page]{
        \includegraphics[scale=0.17]{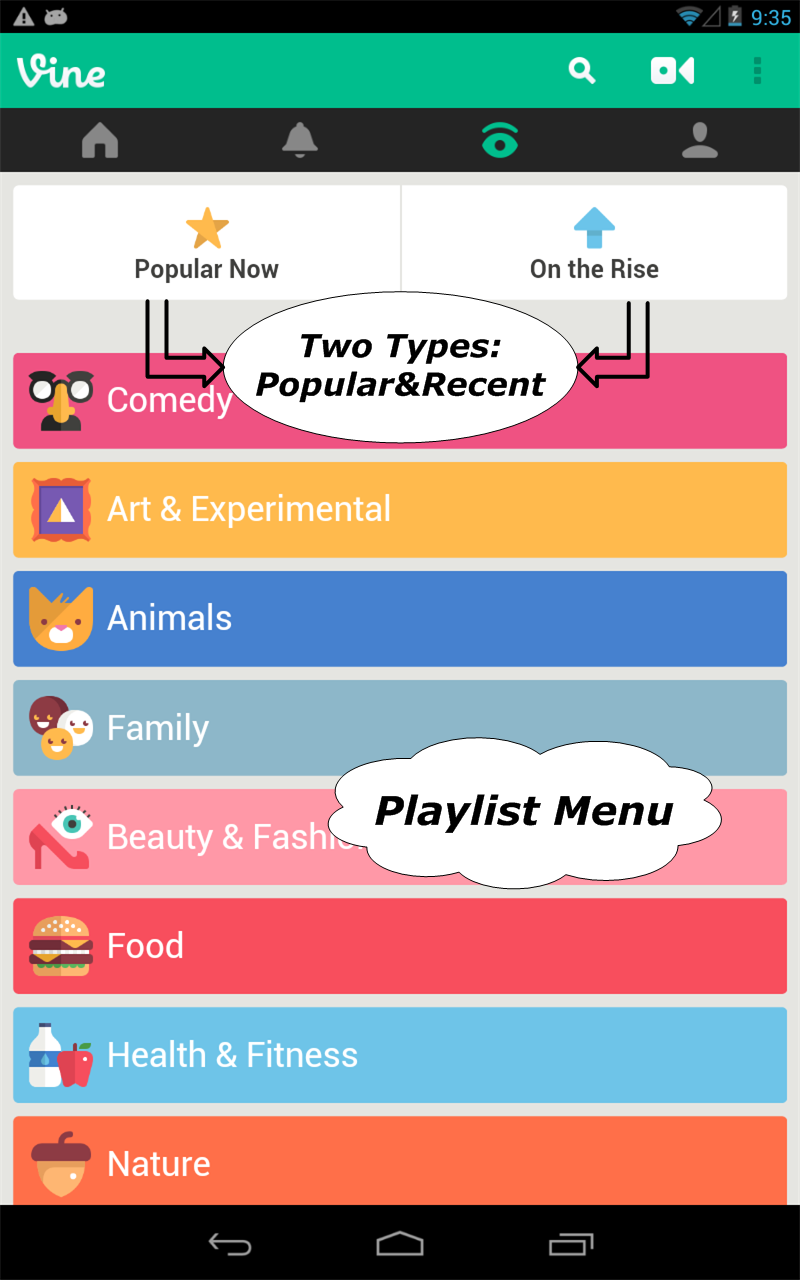}
\label{fig:intf2}
}
\caption{Vine's interfaces} \vspace{-0.4cm}
\label{fig:intf}
\end{figure}

Vine preserves similar social structure as in other online social networks. A key (and significant) difference is that the media of interest in Vine are now ultra short video clips (with 6 second limit for now), while not the traditional texts or still images. This makes its user experience notably different from that of the traditional mobile social networking and video sharing services. In traditional VSSes and OSNs, the users need to click on the interested item to view or link to a specific video, which only allows users to view one video each time/click. Vine-like services, however, return a playlist of video clips once the user decides to view the updates for certain users, tags, or channels. As the user scrolls the screen of his smartphone/tablet, the mobile app plays a number of video clips from the generated list seamlessly. Given the linear formation of instant video clips in playlists, \textit{screen scrolling} is the key user action for the mobile instant video clip sharing services, which uniquely differentiates Vine-like services not only from traditional VSSes and OSNs, but also from other mobile VoD or video streaming applications. The small file size of the instant video clip makes it very suitable for the mobile scenario in terms of downloading and watching: the user can easily access the service at anytime and anywhere, and watch a number of instant video clips even for a short duration (e.g., waiting for a bus or during a course break).

We refer to this unique user behavior of viewing multiple video clips with the screen scrolling as {\em Batch View}. The batch view behavior implies that mobile users can watch a considerable amount of instant video clips within the playback time of one normal video such as a YouTube video. Besides batch view, another distinct feature is {\em Passive View}, as the mobile users have to ``passively'' watch some of the video clips in the playlist. Similar to a Twitter-like user interface, which is widely adopted in other social networking services, the selected instant video clips that will be viewed are arranged in order, and users have no control over the order of the playbacks. If the user is only interested in two specific video clips that are separated in the playlist, s/he may have to watch all the video clips between them. The only way to proactively skip an instant video clip is to scroll the screen so fast that there is not enough time to play it, which however can hardly be well controlled in practice. If the user scrolls the screen slowly, the following ultra short video clip may have been downloaded and started to play, and thus the user will probably watch it. As the 6-second playback time is so short that it is even not enough to finish reading the description of some video clips (users may have already watched the video clip before making the decision based on the description), the Vine users tend to passively view numerous video clips. The two key user behaviors abstract and emphasize the unique features of this new generation of mobile video clip sharing service.

\begin{figure}[tbp]
        \centering
                \includegraphics[scale=0.425]{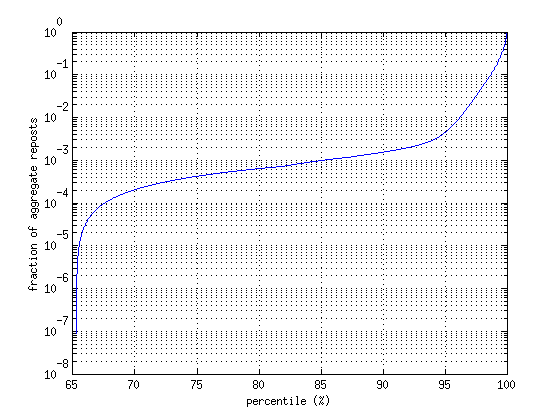}
        \caption{Popularity skewness of Vine instant video clips} \vspace{-0.4cm}
        \label{fig:clog}
\end{figure}

The mobile instant video clip sharing services often lack VCR controls (such as rewind and fast forwarding) that are common in traditional video playback; rather, the mobile users can scroll the screen over the playlist to watch different instant video clips. This mode of instant video clip watching seems to be inflexible, but has its unique advantages in the mobile scenarios. First, as the instant video clips are so short, there is no need to have close and complex controls over each of them. Moreover, given the limited display size of each video, screen scrolling is effective in approaching successive instant video clips in the playlist, enabling users to find interesting contents more easily, and accelerates the propagation of popular videos. Yet, as the key of this video watching mode, screen scrolling, if not being handled properly, may ruin the viewing experience. In the worst case (e.g. a user downloads every instant video clip through 3G with a very limited bandwidth), a vicious circle can happen: the downloading of a just skipped video will take up the network resources and block the downloading of the one that the user intends to watch, which will in turn force the user to give up watching the target video and scroll forward to search for other interesting videos. Conventional approaches, such as deploying proxy servers, may not deal with this issue well, as the bottleneck still exists at the last hop where mobile devices have to download a considerable amount of instant video clips. On the other hand, as a well-known fact, in traditional video sharing services, user interests always focus on a small portion of popular contents. We examine the popularity distribution of Vine videos as shown in Figure~\ref{fig:clog} (details about the data collection will be introduced in Section \ref{sec:ev}), where the top 5\% popular Vine video clips account for more than 99\% reposts. Compared to the first two generations of video sharing services, Vine shows a much higher skewness: the top 10\% popular YouTube videos account for nearly 80\% of views \cite{cha2007tube}; whereas the top 2\% videos in Renren take up 90\% of the total requests, and the top 5\% videos attract 95\% of requests \cite{li2012video}. This extreme skewness implies the opportunities for pre-fetching popular contents at mobile terminals before the video watching. We accordingly model the unique instant video clip watching mode, and enhance the user experience from the mobile side. To the best of our knowledge, this is the first work on modeling and optimizing the instant video clip sharing from this perspective.

\section{Download and Watch Scheduling: Problem and Solution}
\label{sec:ps}

\begin{figure}[tbp]
        \centering
                \includegraphics[scale=0.45]{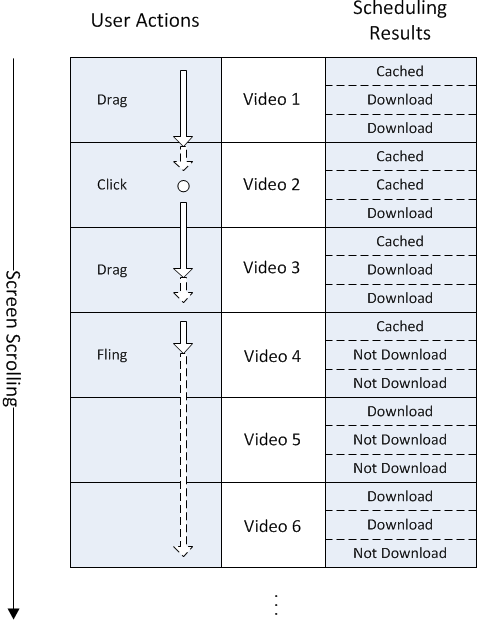}

        \caption{An illustration of playlist optimization} \vspace{-0.4cm}
        \label{fig:list}
\end{figure}

As mentioned, Vine videos are organized in different playlists, which can be characterized into three types: the list of video updates from followees (social videos), the list of promoted videos in popular sections (popular videos), and the list of user uploaded videos in recent sections (recent videos). Only the playlist of social videos changes with different users, and the other two types of playlists remain the same across users.
Denote the playlist of instant video clips that will be watched by the user as $V=\{v_1,v_2,...,v_n\}$. As illustrated in Figure \ref{fig:list}, according to different user input actions, each instant video clip may remain in the user's viewport for different duration. We use $U=\{u_1,u_2,...,u_n\}$ to denote such durations, where $u_i$ corresponds to the duration that the user watches the video $v_i$. Also, we let $u_0$ denote the time that the user starts watching the playlist. We consider two types of network connections in this formulation: mobile connection (e.g, 3G) and wireless connection (e.g., WiFi). We use $B(t)$, $C(t)$, and $E(t)$ to denote the available bandwidth, the monetary cost, and the energy consumption at a given time $t$, respectively, where $B(t) \in \{B_{wifi}, B_{3G}\}$, $C(t) \in \{C_{wifi}, C_{3G}\}$ ($C_{wifi}=0$, since the cost for WiFi connections is usually negligible), and $E(t) \in \{E_{wifi}, E_{3G}\}$. As in practice both the major mobile communication protocols and wireless communication protocols work in some basic time units, in our problem, we also divide the time evenly into discrete time slots, where one time slot may contain an integer multiple of such basic time units.
Let $R$ be the video streaming rate of the mobile video sharing service, and $L$ be the video length (in Vine's case, 6 seconds). It is worth noting that given the video length limit is only several seconds, most users create and post videos that fulfill the length limit and thus the file size of Vine videos is almost the same after transcoding to a certain resolution. Therefore, we use the same $R$ and $L$ for different instant video clips. Define a video downloading schedule as $S =\{(\hat{v_1},\hat{t_1},\hat{l_1}),(\hat{v_2},\hat{t_2},\hat{l_2}),...,(\hat{v_k},\hat{t_k},\hat{l_k})\}$, where a tuple $(\hat{v_i},\hat{t_i},\hat{l_i})$ ($\hat{v_i} \in V$ and $ \hat{l_i}>0$) means at time $\hat{t_i}$, we start to download video $\hat{v_i}$ for the duration $\hat{l_i}$.

Our problem is to find a proper schedule $S$, that can optimize the video watching experience with high efficiency. To this end, we define the playback discontinuity of a single video $v_i$ watched for the duration $u_i$ as:


\begin{equation*}
\begin{aligned}
& discontinuity(v_i)=1-\frac{1}{\min(u_i,L)} \cdot \sum_{\substack{t\in(\sum\limits_{k=0}^{i-1}u_k,\min(\sum\limits_{k=0}^{i}u_k,\\ \sum\limits_{k=0}^{i-1}u_k+L)]}} \\
&\mathbbm{I}{\displaystyle [\sum_{\substack{(\hat{v}_j,\hat{t}_j,\hat{l}_j)\in S,\\ \hat{v}_j=v_i, \hat{t}_j\leq t}}\sum_{\hat{t} = \hat{t}_j}^{\min(\hat{t}_j+\hat{l}_j-1,t)} B(\hat{t}) \geq (t-\sum_{k=0}^{i-1}u_k)\cdot R]},
\end{aligned}
\end{equation*}
where $\mathbbm{I}{[\cdot]}$ is the indicator function.
The single video playback discontinuity naturally reflects the user experience for a continuous playback, which calculates how many time slots the downloading of this video misses the deadline for the playback. We further define the playback discontinuity of the playlist $V$ as a weighted sum of the single video playback discontinuities:
\begin{equation*}
Discontinuity= \sum_{v_i\in V} w_i \cdot discontinuity(v_i),
\end{equation*}
where $w_i$ is the normalized weight for $v_i$. An intuitive assignment of $w_i$ can be $\frac{1}{\sum_{k=1}^n u_k} u_i$,
which assigns higher weights to the videos that have longer watching durations, as longer watching durations usually imply higher user interests. We will further discuss more specific assignments of $w_i$ later.

Our objective is thus to minimize the playback discontinuity, as well as maximize the efficiency, which is to minimize the total monetary cost:
\begin{equation*}
C_{total}=\sum_{(\hat{v}_j,\hat{t}_j,\hat{l}_j)\in S} \sum_{t=\hat{t}_j}^{\hat{t}_j+\hat{l}_j-1} C(t),
\end{equation*}
and the total energy consumption:
\begin{equation*}
E_{total}=\sum_{(\hat{v}_j,\hat{t}_j,\hat{l}_j)\in S} \sum_{t=\hat{t}_j}^{\hat{t}_j+\hat{l}_j-1} E(t).
\end{equation*}
It is easy to see that these objectives may contradict with each other, as downloading more portions of the playlist can reduce the playback discontinuity but will also inevitably consume more energy and may increase the monetary expense. We thus adopt the following linear combination form to align them together:
\begin{equation}
\label{equ:obj}
p \cdot Discontinuity + q \cdot \frac{C_{total}}{C_{max}} + r \cdot \frac{E_{total}}{E_{max}},
\end{equation}
where $p$, $q$ and $r$ are the parameters to assign different weights to the three goals. As $Discontinuity$ is a ratio between $[0,1]$, we also normalize the monetary cost and the energy consumption by their corresponding maximum values, where $C_{max}$ is the maximum total cost of the case that all the videos in the playlist are downloaded through 3G links, and $E_{max}$ can be obtained similarly. We then have the following theorem:
\newtheorem{theorem}{Theorem}
\begin{theorem}
The decision version of the modeled generic downloading scheduling problem is a NP-complete problem.
\end{theorem}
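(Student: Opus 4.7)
The plan is to prove NP-completeness in the standard two stages: first verify membership in NP, then exhibit a polynomial-time reduction from a known NP-complete problem. For membership, observe that a candidate schedule $S$ has size polynomial in $n$ and in the horizon $\sum_{k=0}^{n} u_k$ (measured in time slots), so a verifier can in polynomial time (i) check feasibility of each tuple $(\hat{v}_i,\hat{t}_i,\hat{l}_i)$, (ii) compute $discontinuity(v_i)$ for every $v_i$ by summing delivered bits against the playback deadlines defined by the partial sums of $u_k$, (iii) aggregate $Discontinuity$, $C_{total}$, and $E_{total}$, and (iv) test whether the linear combination~(\ref{equ:obj}) is at most the guessed threshold $\theta$.

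For NP-hardness I would reduce from \textsc{Partition} (equivalently \textsc{Subset Sum} or $0/1$ \textsc{Knapsack}), exploiting the two-network structure of the model. Given positive integers $a_1,\ldots,a_n$ with $\sum_i a_i = 2A$, I construct a playlist of $n$ instant clips of identical length $L$ and rate $R$, and set the viewing durations so that clip $v_i$ occupies $a_i$ time slots of the user's attention (padding as needed so the total occupancy is exactly $2A$). The bandwidth profile $B(t)$ is then designed to expose exactly $A$ time slots of WiFi (with $C_{wifi}=0$) and $A$ time slots of 3G (with $C_{3G}>0$), each just sufficient to fully deliver the clip viewed during that slot. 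I pick $r=0$, $p$ large enough that any nonzero discontinuity is already above $\theta$, and $q,\theta$ such that the monetary budget corresponds to exactly $A$ units of 3G traffic. A schedule meeting the threshold then exists iff the clips can be split into two groups of total duration $A$ each, one served entirely over WiFi and the other over 3G, which is precisely a YES-instance of \textsc{Partition}.

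The main obstacle will be taming the multi-criterion objective~(\ref{equ:obj}) so that only the intended combinatorial choice matters. Concretely, I must prevent the scheduler from ``cheating'' by (a) splitting a single clip's download across WiFi and 3G in a way that breaks the clean 0/1 correspondence, and (b) trading a tiny increase in $Discontinuity$ for a large decrease in $C_{total}$. Issue (a) I would handle by sizing each clip's data so that its download requires the entire bandwidth of whichever link carries it for the full $a_i$ slots, making any partial assignment violate the playback deadline and hence inflate discontinuity. Issue (b) I would handle by choosing $p/C_{max}$ and $q/C_{max}$ so that even one missed slot outweighs any conceivable cost saving across the construction. Once these inequalities are verified, the reduction is polynomial in the bit-length of the \textsc{Partition} instance and the equivalence is immediate, establishing NP-hardness. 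Combined with membership in NP, the theorem follows.
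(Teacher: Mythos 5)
Your overall route is the same as the paper's: membership in NP via a polynomial-size schedule certificate, then hardness via a reduction from a number-partitioning problem (you use \textsc{Partition}; the paper uses Subset Sum) that exploits the free-WiFi versus paid-3G dichotomy, with $p$ taken large enough that any nonzero discontinuity is disqualifying and the threshold chosen so that meeting it forces the free capacity to be packed exactly. The paper realizes this by placing $W$ WiFi slots \emph{before} the first playback, making only 3G available during playback, and fixing $M$ so the objective is attainable iff the WiFi window is filled by whole videos whose download lengths sum to exactly $W$.

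The genuine gap is in your handling of obstacle (a), which is exactly the point the whole reduction hinges on. In this model a schedule may contain several tuples for the same video, so a clip can be partially prefetched over WiFi and finished over 3G. Your proposed enforcement --- size each clip so that it saturates whichever link carries it for its full $a_i$ slots, so that ``any partial assignment violates the playback deadline'' --- does not hold: prefetched data only adds slack, and the un-prefetched remainder, downloaded during the clip's own viewing window at rate $B_{3G}\geq R$, still meets every per-slot deadline, so a split assignment incurs no discontinuity at all. But once splitting is allowed, the WiFi capacity can always be filled exactly (prefetch clips greedily in playlist order and cut one clip at the boundary), so every constructed instance becomes a YES-instance and the equivalence with \textsc{Partition} evaporates. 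The alternative reading of your construction --- that the $A$ WiFi and $A$ 3G slots coincide with the viewing windows, with no prefetch window --- fails differently: the bandwidth profile and the viewing timeline are both part of the instance, so which link covers each clip is then fixed in advance, the scheduler has no binary choice left, and nothing encodes \textsc{Partition}. To repair the argument you must either restrict attention to whole-clip (one-tuple-per-video) schedules or build a gadget in which fractional WiFi/3G splits are genuinely penalized; note that the paper's own proof quietly assumes whole-video downloads in the WiFi window at the step ``if videos $i_1,\dots,i_k$ are the ones scheduled in the first $W$ WiFi slots, then $\hat{l}_{i_1},\dots,\hat{l}_{i_k}$ add up to exactly $W$,'' so your proposal inherits the same soft spot, but your explicit claim that deadline violations enforce integrality is the step that is concretely wrong.
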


\begin{proof}
We first state the corresponding decision problem: given all the required parameters, is there a schedule with the objective value given by Equation \ref{equ:obj} at most $M$? Given an instance of this decision problem, a certificate that it is solvable would be a specification of the downloading schedules for each video. We could then check each video's playback discontinuity, downloading monetary cost, energy consumption and whether the objective value is no greater than $M$.

We now show that the Subset Sum problem is reducible to this decision problem. Consider an instance of Subset Sum with numbers $\hat{l}_1,...,\hat{l}_n$ and a target $W$. To construct an equivalent scheduling instance, one may be struck initially by the fact that we have so many parameters to manage. The key is to sacrifice some of the flexibility, producing a simpler ``skeletal" instance of the problem that still encodes the Subset Sum problem. Assume that $p/q$ and $p/r$ are large enough so that any increase of discontinuity will increase the objective value. Let $B_{wifi}\geq B_{3G} \geq R$, and thus, for each video, the playback discontinuity can be guaranteed to be zero as long as it is scheduled for downloading no later than its playback. We also assume that for any time slot, $C_{3G} > C_{wifi}$, $E_{3G} > E_{wifi}$. Given the type of connection, the monetary cost and energy consumption are fixed for each time slot. Therefore, the objective value given by Equation \ref{equ:obj} can be taken as a function of number of time slots for a certain type of connection (defined as $O_{wifi}(m)$ and $O_{3G}(m)$, where $m$ is the number of time slots).

Given videos $1,2,...,n$ and video $k$ has a watch duration of $u_k$, Define $\hat{l}_k$ as the time slots taken to download video $k$ ($\hat{l}_k=u_k\cdot R/B$, where $B \in \{B_{wifi}, B_{3G}\}$).
The corresponding full Subset Sum instance is to find whether there is a non-empty subset that sum to $W$ given a set of integers $\hat{l}_1, \hat{l}_2,...,\hat{l}_n$.
Assume that there are $W$ time slots of WiFi links before the first video playback, and during the playback from video 1 to video $n$, only 3G connection is available. Given any value of $M$, $W$ can be determined such that
$O_{wifi}(W)+O_{3G}((R \cdot \sum_{k=1}^n u_k - W\cdot B_{wifi})/B_{3G})=M$. 

Now consider any feasible schedule to this instance of the decision problem. To limit the objective value no greater than $M$, there must not be any idle time slot of WiFi downloadings before the first video playback according to the definition of $W$.
If the first $W$ WiFi slots have not been fully utilized, there must be some videos that have not been downloaded or more than appropriate amount of videos are downloaded through 3G connection instead of using WiFi, both of which would cause the objective value greater than $M$ according to our assumptions.
In particular, if videos $i_1,...,i_k$ are the ones that are scheduled to be downloaded in the first $W$ WiFi slots with durations $\hat{l}_{i_1},...,\hat{l}_{i_k}$, then the corresponding numbers $\hat{l}_{i_1},...,\hat{l}_{i_k}$ in the Subset Sum instance add up to exactly $W$.

Conversely, if there are numbers $\hat{l}_{i_1},...,\hat{l}_{i_k}$ that add up to exactly $W$, then we can schedule these intervals to be download videos in the first $W$ WiFi time slots, and the remainder are downloaded during the playback, which is a feasible solution to the decision problem. This finishes the proof that the decision version of the original optimization problem is NP-complete.
\end{proof}

We now present a practically effective solution that takes the unique structure of the problem into account. It divides the problem into two cases depending on  whether the scheduling part is before or during a user watches the playlist. The former demands a pre-fetching scheduling, whereas the latter demands a watch-time download scheduling.

\subsection{Pre-fetching Scheduling}
 We first consider the pre-fetching scheduling. It happens well before the user starts watching the playlist, i.e., without a stringent time constraint; hence we can offload the mobile traffic to the wireless network to reduce the transmission cost. The objective is to find a schedule $S_{pf}$ to pre-fetch the videos, subjecting to the following constraints:\\

\noindent (1) Storage Constraint:
\begin{equation*}
\sum_{(\hat{v}_j,\hat{t}_j,\hat{l}_j)\in S_{pf}} \sum_{t=\hat{t}_j}^{\hat{t}_j+\hat{l}_j-1} B(t)\leq StorageSize ;
\end{equation*}
\\
(2) Cost Constraint:
\begin{equation*}
\forall (\hat{v}_j,\hat{t}_j,\hat{l}_j)\in S_{pf},
\sum_{t=\hat{t}_j}^{\hat{t}_j+\hat{l}_j-1} C(t) = 0 .
\end{equation*}
The storage constraint ensures that the total amount of pre-fetched video will not exceed the limited local storage. And the cost constraint implies that the pre-fetching is performed only through WiFi links. Note that the playlist $V$ during the pre-fetching may only be a subset of that during the watch-time, as the pre-fetching occurs before the video watching and new videos may be added to the playlist after the pre-fetching, which will be handled by our watch-time download scheduling to be discussed in the next subsection.

Given that the user behavior during the video watching is unknown at this stage (nor the watching duration $U$), we thus introduce $P=\{p_1,p_2,...,p_n\}$ to denote the user preference on each video in the playlist, which can reflect the potential lengths of the watch durations. In practice, $P$ can be determined in different ways: for example, it may be a specific metric such as video popularity, video timeliness or the social distance between the publisher (the user who reposts the video) and the consumer (the user who may watch the video); it can be also modeled as a function which considers all of the above factors. In this work, we use the video popularity as the metric of user preference. In addition, we introduce parameter $\alpha \in [0,1]$ to represent the aggressiveness of the pre-fetching. In particular, for each instant video clip, we pre-fetch $\alpha$ of the total video, instead of downloading the whole clip. Given the limited storage on mobile devices, pre-fetching a part of the video, not all of it, indicates that more videos can be pre-fetched, and thus the start-up delay can be greatly reduced, which also makes the pre-fetching scheduling can better work with the watch-time download scheduling.
The playback discontinuity of a single video $v_i$ can then be rewritten as
\begin{equation*}
discontinuity(v_i) = 1-\frac{1}{\alpha \cdot L} \cdot  \frac{pf(v_i)}{R},
\end{equation*}
where $pf(v_i)$ defines how much of this video has been pre-fetched:
\begin{equation*}
pf(v_i)=\sum_{(\hat{v}_j,\hat{t}_j,\hat{l}_j)\in S_{pf}, \hat{v}_j=v_i}\sum_{\hat{t}=\hat{t}_j}^{\hat{t}_j+\hat{l}_j-1}B(\hat{t}).
\end{equation*}

As stated in the generic formulation, the next step is to find a proper assignment of $w_i$. For this subproblem, we define $w_i$ as
\begin{equation*}
w_i=\frac{1}{\sum_{k=1}^n p_k \cdot discontinuity(v_k)} p_i \cdot discontinuity(v_i),
\end{equation*}
which considers both the user preference for $v_i$ and its current playback discontinuity. It is worth noting that $w_i$ decreases as more of $v_i$ has been pre-fetched. The intuition is that, given the batch view behavior, it is not reasonable to allocate all the resources to a tiny portion of extremely popular videos. In practice, the first several units of a video clip are usually requested with a much higher probability than its later part. Together with the pre-fetching aggressiveness $\alpha$, this assignment of $w_i$ allows us to pre-fetch more instant video clips with the video preference still being considered.

As the monetary cost for WiFi links is usually negligible, our goal here is to minimize $Discontinuity$ with the following form:
\begin{equation*}
\begin{aligned}
Discontinuity 
= & \frac{1}{\sum_{k=1}^n p_k \cdot discontinuity(v_k)}\\
& \cdot \sum_{v_i\in V} p_i \cdot discontinuity(v_i)^2.
\end{aligned}
\end{equation*}
One may notice that, different from Equation \ref{equ:obj}, this objective function does not directly involve the energy consumption of pre-fetching. In this formulation, we use $\alpha$ to control the trade-off between the energy consumption and the playlist playback discontinuity. As $\alpha$ gets larger, more videos would be pre-fetched, which causes more energy consumption; on the contrary, if $\alpha$ is small, only a small portion of videos will be pre-fetched, and thus little energy is consumed. Therefore, the above objective function can still represent the overall performance.

This pre-fetching scheduling subproblem is a variation of the knapsack problem with a total weight limit
\begin{equation*}
W=\min(StorageSize, \sum_{\forall t \text{ such that } C(t)=0}B(t)),
\end{equation*}
where an object is one time slot length of video playback, and its value is the amount of decrease of $p_i \cdot discontinuity(v_i)^2$ after pre-fetching one more time slot, if the object belongs to video $v_i$. It is easy to see that while the weight of each object is the same, the value changes as the decisions are made, i.e., as one object of video $v_i$ is downloaded, the value of all the remained objects of video $v_i$ decreases as now $discontinuity(v_i)$ decreases. To this end, we design a greedy algorithm by searching and downloading one object that currently has the greatest value in each iteration. Recall that in our scenario all the objects have the same weight. Therefore, as our solution can find the optimal result in each iteration, it can actually return the final optimal pre-fetching schedule.


\subsection{Watch-time Download Scheduling}
Unlike the pre-fetching, during the watch-time, the video watching durations can be largely determined from the input user gestures. In general, the mobile instant video clip sharing services allow three types of user gestures: click, drag and fling, where the last two gestures can cause screen scrolling. Once an user gesture is given, the following process of screen moving is predetermined. Given the fixed display size of each instant video clip (specifically, the fixed height), the motion of screen scrolling can be modeled and calculated, and the details of the scrolling process can be obtained accurately (e.g., how many videos are present, how long each video will stay in the viewport), which can hardly be done in VoD or video streaming applications. Although different operating systems have different technical details for implementation, the philosophy for animating the screen scrolling is generally the same, which is to gradually decelerate the scrolling speed until it reaches zero if there is no other finger touch detected during the deceleration. We next show how to calculate the video watching durations according to user gestures, by taking the Android OS as an example.

By detecting and collecting relevant information about the user's finger touch, the initial scrolling speed $s_0$ can be calculated as the dragging distance divided by the touch time in the unit of $pixels/second$. As the screen only scrolls vertically in mobile video sharing services, we denote the height of each instant video clip as $h$. By default, the threshold $s_T$ for the initial scrolling speed to distinguish between a drag and a fling in the Android OS is $50$ $pixels/second$, which can be scaled under different configurations based on the actual screen resolution.

\textit{Dragging:}
In the case of dragging, the screen scrolling speed will experience a uniform deceleration with the default deceleration $d=2000$ $pixels/second^2$. Given the initial speed $s_0$, there will be $\lfloor s_0^2/2hd \rfloor$ video clips covered by this drag gesture. In the deceleration process, for the $m$-th video clip showing in the dragging animation, we have
\begin{equation}
\label{equ:d1}
mh=s_0t_m-dt_m^2/2,
\end{equation}
where $t_m$ is the time that the $(m+1)$-th video clip starts to enter the viewport ($t_0=0$). Solving Equation \ref{equ:d1} gives us
\begin{equation}
\label{equ:d2}
t_m=(s_0-\sqrt{s_0^2-2mhd})/d  \text{   }  (1 \leq m \leq \lfloor s_0^2/2hd \rfloor).
\end{equation}

\textit{Flinging:}
If a fling is detected, the deceleration will change with the scrolling speed. Given the scrolling speed $s$, the total fling duration $T(s)$ and the total fling distance $D(s)$ can be calculated by using the following equations:
\begin{equation}
\label{equ:f1}
l(s)=log[0.35\cdot s/(Fric\cdot P_{COEF})],
\end{equation}
\begin{equation}
\label{equ:f2}
T(s)=1000\cdot exp[l(s)/(D_{RATE}-1)],
\end{equation}
\begin{equation}
\label{equ:f3}
D(s)=Fric\cdot P_{COEF}\cdot exp[D_{RATE}/(D_{RATE}-1)\cdot l(s)],
\end{equation}
where $D_{RATE}=log(0.78)/log(0.9)$, $Fric$ denotes the parameter of the friction with the default value as $0.015$, and $P_{COEF}=G\cdot 39.37\cdot ppi\cdot 0.84$. To compute $P_{COEF}$, $G$ is the gravity of Earth with a constant value of $9.80665$ $m/s^2$, $39.37$ is used for the conversion between meters and inches, and $ppi$ denotes the parameter of pixels per inch for the specific mobile device. From Equation \ref{equ:f2} and \ref{equ:f3}, we can derive
\begin{equation}
\label{equ:f4}
D(s)=Fric\cdot P_{COEF}\cdot (T(s)/1000)^{D_{RATE}},
\end{equation}
and
\begin{equation}
\label{equ:fadd1}
T(s)=1000\cdot [D(s)/(Fric\cdot P_{COEF})]^\frac{1}{D_{RATE}}.
\end{equation}
Given the initial speed $s_0$, there will be $\lfloor D(s_0)/h \rfloor$ video clips covered by this fling gesture. Assume $s_m$ is the scrolling speed at time $t_m$. In the deceleration process, the following equation is also satisfied:
\begin{equation}
\label{equ:f5}
D(s_0)-D(s_m)=mh \text{		}  (1 \leq m \leq \lfloor D(s_0)/h \rfloor).
\end{equation}
By combining Equation \ref{equ:f4}, \ref{equ:fadd1}, and \ref{equ:f5}, we have

\begin{equation}
\begin{aligned}
\label{equ:f6}
t_m&=T(s_0)-T(s_m) \\
&=T(s_0)-1000\cdot [D(s_m)/(Fric\cdot P_{COEF})]^\frac{1}{D_{RATE}} \\
&=T(s_0)-1000\cdot [(D(s_0)-mh)/(Fric\cdot P_{COEF})]^\frac{1}{D_{RATE}} \\
&=T(s_0)-1000\cdot [D(s_0)/(Fric\cdot P_{COEF})-mh/(Fric\cdot P_{COEF})]^\frac{1}{D_{RATE}} \\
&=T(s_0)-1000\cdot [(T(s_0)/1000)^{D_{RATE}}-mh/(Fric\cdot P_{COEF})]^\frac{1}{D_{RATE}} \\
&(1 \leq m \leq \lfloor D(s_0)/h \rfloor)
\end{aligned}
\end{equation}



It is worth noting that, as the basis of this analysis, Equation \ref{equ:d1}, \ref{equ:f1}, \ref{equ:f2} and \ref{equ:f3} are obtained from our analysis of the Android OS source code.
Assume that the user will focus on one video at any given time. From Equations~\ref{equ:d2} and~\ref{equ:f6}, the watching duration of $m$-th video clip showing in the screen scrolling animation can be obtained as $u_m=t_m-t_{m-1}$.
Based on the above analysis, we can now tell how many videos are scrolled by a user gesture and how long each video can stay in the viewport. Therefore, the video watching duration $U=\{u_1,u_2,...,u_n\}$ is available once the input user gestures are given. This subproblem of watch-time download scheduling is thus to find a proper real-time download schedule $S_{rd}$, so as to minimize our objective:
\begin{equation*}
p \cdot Discontinuity + q \cdot \frac{C_{total}}{C_{max}} + r \cdot \frac{E_{total}}{E_{max}}.
\end{equation*}

\begin{algorithm}[tbh]
\caption{Playlist Scheduling}
\label{alg1}
\begin{algorithmic}[1]
\WHILE {\TRUE}
\IF {a video $v$ is shown on screen}

\STATE {Remove $v$ from $V_p$;}
\IF {a new user input gesture comes}
\STATE {Set all slots in $Q$ to empty;}
\STATE {Obtain $V_w$ based on the user input gesture model;}

\STATE Sort $V_w$ in descendant order according to $u_i^2 \cdot discontinuity(v_i)$;
\STATE Delete the videos that cannot reduce the objective value if downloaded from $V_w$;
\WHILE {$V_w$ is not empty}
\STATE Pick the 1st video $v_1$ out of $V_w$;
\WHILE {$v_1$ is not fully downloaded}
\IF {there exist available slots in $Q$}
\STATE Update $Q$ to assign the closest-to-deadline slot to $v_1$;
\ELSE
\STATE break;
\ENDIF
\ENDWHILE
\ENDWHILE
\IF {there are available WiFi slots in $Q$}
\STATE Update $Q$ to move later scheduled downloadings forward to fill the WiFi slots;
\ENDIF
\ENDIF

\ELSE
\STATE Update $V_p$ for newly arrived videos;

\IF {cache is not full \AND WiFi is available}
\STATE Search $V_p$ to find the video $v$ with the largest decrease amount of $p_i \cdot discontinuity(v_i)^2$ assuming one more unit of the video is pre-fetched;
\STATE Update $Q$ to schedule downloading one unit of $v$;

\ENDIF
\ENDIF

\STATE Download one unit of video if currently scheduled in $Q$ and update $Q$ accordingly;

\ENDWHILE
\end{algorithmic}
\end{algorithm}

Note that $Discontinuity$ here needs to consider the result of the pre-fetching schedule with the updated $discontinuity(v_i)$:
\begin{equation*}
\begin{aligned}
& discontinuity(v_i)=1-\frac{1}{\min(u_i,L)} \cdot \sum_{\substack{t\in(\sum\limits_{k=0}^{i-1}u_k,\min(\sum\limits_{k=0}^{i}u_k,\\ \sum\limits_{k=0}^{i-1}u_k+L)]}} \\
&\mathbbm{I}\displaystyle [\sum_{\substack{(\hat{v}_j,\hat{t}_j,\hat{l}_j)\in S_{rd},\\ \hat{v}_j=v_i, \hat{t}_j\leq t}}\sum_{\hat{t} = \hat{t}_j}^{\substack{\min(\hat{t}_j+\\\hat{l}_j-1,t)}} B(\hat{t}) + pf(v_i) \geq (t-\sum_{k=0}^{i-1}u_k)\cdot R],
\end{aligned}
\end{equation*}
where the amount of the video $v_i$ that has been pre-fetched ($pf(v_i)$) is also considered in the calculation. For the watch-time downloading subproblem, we define $w_i$ as
\begin{equation*}
w_i=\frac{1}{\sum_{k=1}^n u_k^2} u_i^2,
\end{equation*}
which emphasizes the importance of watch durations.

The problem is essentially to find a best trade-off between playback discontinuity, monetary cost and energy consumption during the watch-time. Given the objective function, we are able to tell whether downloading one unit of $v_i$ is beneficial to the overall result. Therefore, only those videos that can reduce the objective value if downloaded need to be considered for scheduling. As mentioned in the generic problem, videos with higher watching durations have higher impacts on the objective function, which should be scheduled for downloading with higher priorities.
To this end, we propose a heuristic that always tries to download one time slot of the video $v_i$ that has the highest value of $u_i^2 \cdot discontinuity(v_i)$ in the unscheduled set, and schedules its downloading interval as late as possible (i.e., closest to but still before its playback deadline) so as to only introduce the minimal impact on other videos to be scheduled later. If there still exist available WiFi downloading slots after the scheduling finishes, we move the video downloadings scheduled in the later time slots forward to these WiFi slots so as to fill all the time slots with lower cost.

We integrate and summarize our solutions for both subproblems in Algorithm~\ref{alg1}. $V_p$ is the set of videos that are considered for pre-fetching, which can be initialized as the whole playlist. $V_w$ is the set of videos that are scheduled during watch-time according to the dragging/flinging model given an user gesture input. $Q$ is the schedule queue, which denotes whether each time slot is available or assigned to download one unit of a certain video. We analyze the time complexity of Algorithm~\ref{alg1} by examining the two parts that solve the the corresponding subproblems, respectively. For the watch-time downloading subproblem, the first part of our algorithm (line 2-22) first sorts $V_w$ accordingly ($O(|V_w|\log|V_w|)$), and then search $Q$ for a proper schedule for each video ($O(|V_w||Q|)$). Although this part is executed every time a user gesture is detected, given that any user gesture can only affect a limited number of videos that will show on the screen, $|V_w|$ is hence quite small, and so is the searching space in $Q$, which implies that our algorithm can work efficiently during the watch-time.
For the pre-fetching subproblem, the second part of our algorithm (line 23-28) searches for the next video to pre-fetch. Assume that each video is divided into $M$ units. The searching process will thus be performed at most $|M||V_p|$ rounds. As each round of the search process only has the complexity of $O(|V_p|)$, its efficiency is also acceptable especially given that the pre-fetching happens well before the video watching and is often with a much longer time span.

\section{Performance Evaluation}
\label{sec:ev}

\begin{figure}[btp]
	\centering
		\includegraphics[width=0.325\textwidth]{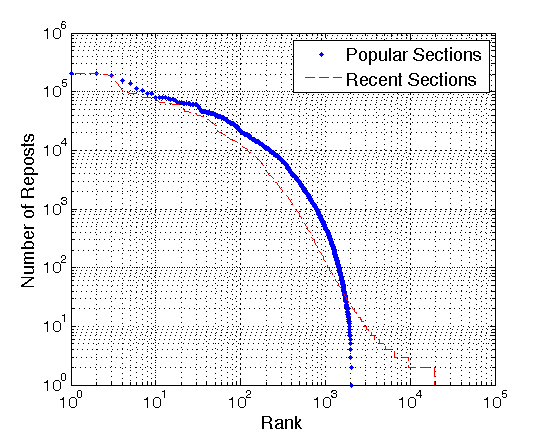}
	\caption{Popularity distribution}
\vspace{-0.4cm}
	\label{fig:pop}
\end{figure}

\begin{figure}[btp]
\centering
\subfigure[\small Videos uploaded into popular sections]{
	\includegraphics[width=0.22\textwidth]{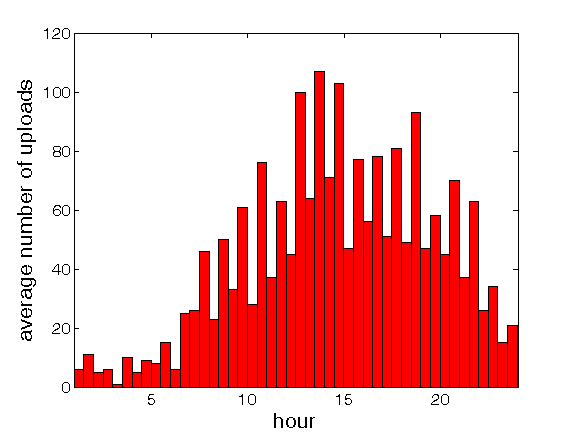}
\label{fig:vp1}
}
\subfigure[\small Videos uploaded into recent sections]{
	\includegraphics[width=0.22\textwidth]{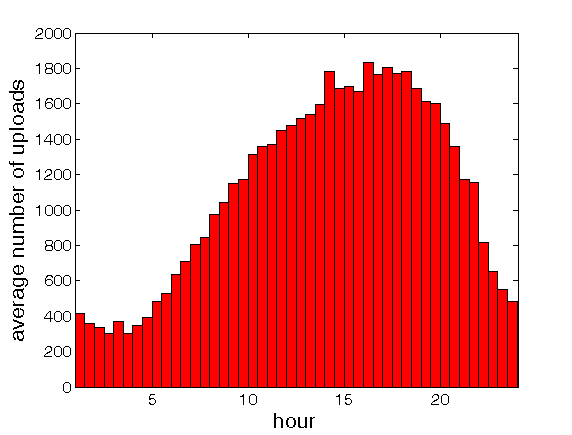}
\label{fig:vp2}
}
\caption{Average video uploading rate}
\vspace{-0.4cm}
\label{fig:vp}
\end{figure}

In this section we evaluate our proposed approach by conducting extensive trace-driven simulations, which exploit the real-world data of Vine videos and the user gestures recorded during the Vine watching experience.

\subsection{Data Sets and Traces}
\subsubsection{Vine Video Measurement}
Our measurement on Vine is mainly on the video uploading rate and the video popularity in terms of the number of reposts. In particular, we develop a customized crawler that imitates the behaviors of the mobile application by sending the same requests and substituting the header information with the headers of mobile browsers. The collected data set consists of the videos from popular sections and recent sections in Vine's user channels during the week of November 21-27, 2013, where we recorded the exact time that each instant video clip was uploaded to the user channels. We further track how many times each video has been reposts as the metric of popularity. Figure \ref{fig:pop} plots the video popularity versus its rank for the two types of videos, in which videos from recent sections exhibit a higher skewness. The average numbers of hourly video uploads during a day for popular and recent sections are shown in Figure \ref{fig:vp}.

\subsubsection{User Gesture Traces}
As we have emphasized in previous sections, Vine-like services introduce the revolutionary change of video watching mode for mobile users. To obtain the real-world user gesture traces that directly reflect the unique user experience of the new generation of mobile instant video clip sharing services, we implement a mobile app on Andriod platforms to record the exact user touch events. We recruit 10 volunteers to watch Vine videos using the official client on Android under 3G connections and WiFi connections, respectively. Each experiment is conducted around 5 minutes. We plot two important characteristics of user behavior from the collected traces: Figure \ref{fig:int} shows the probability density distribution of the user gestures' inter-arrival time,  and its curve fitting result; and Figure \ref{fig:fs} plots the histogram of the measured initial scrolling speed of the triggered flinging animations. The user gesture traces are used to simulate video watch durations by applying the dragging/flinging model.

\begin{figure}[tbp]
\centering
\subfigure[\small Inter-arrival time of user actions under 3G scenarios]{
	\includegraphics[width=0.2225\textwidth]{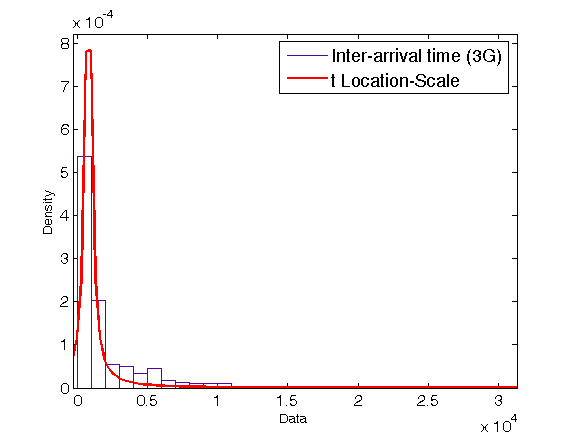}
\label{fig:int1}
}
\subfigure[\small Inter-arrival time of user actions under WiFi scenarios]{
	\includegraphics[width=0.2225\textwidth]{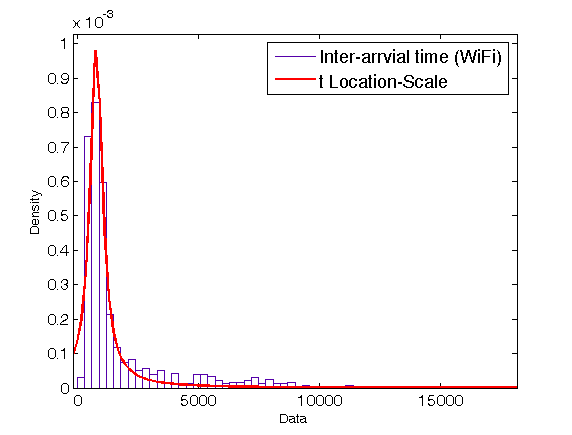}
\label{fig:int2}
}
\caption{Density of inter-arrival time of user gestures}
\vspace{-0.4cm}
\label{fig:int}
\end{figure}

\begin{figure}[tbp]
\centering
\subfigure[\small Fling speed under 3G scenarios]{
	\includegraphics[width=0.2225\textwidth]{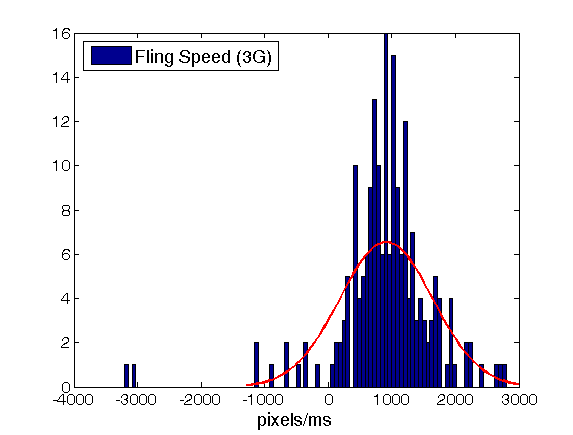}
\label{fig:fs1}
}
\subfigure[\small Fling speed under WiFi scenarios]{
	\includegraphics[width=0.2225\textwidth]{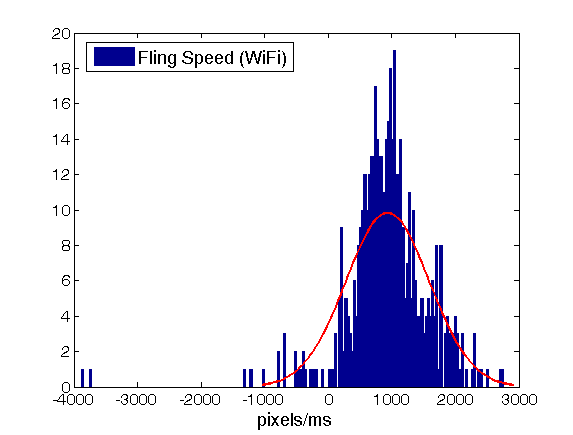}
\label{fig:fs2}
}
\caption{The initial scrolling speed of flings}
\vspace{-0.4cm}
\label{fig:fs}
\end{figure}

\subsection{Methodology}
We conduct extensive simulations to evaluate our proposed pre-fetching and watch-time downloading solution with the first-hand measurement results as well as data traces from Vine and the mobile users who participated in our data collection experiments. For comparison, we implement another two video downloading schemes.
\textit{Sequential Downloading} (SeqD) downloads all the videos according to their order in the playlist, and disregards all the user actions, which takes the playlist with instant video clips as a single long video, and is the most aggressive downloading scheme with the least flexibility.
\textit{Next-one Downloading} (NextD) always attempts to download the next video that will enter the viewport during the current playback, which emulates the current caching strategy of Vine. If the current watching duration is not long enough to download an instant video clip, the next playback will be interrupted. Unlike our proposed solution, both of SeqD and NextD are not capable to obtain and utilize the information of watching durations, which implies that they always try to download the entire target video rather than a fraction of it.
In addition, we use playback discontinuity, monetary cost, energy consumption as the three metrics in our evaluation, which directly relate to the three goals of our optimization problem. We randomly introduce 20 video watching events of Vine during the daytime from 9 a.m. to 9 p.m. in the week-long data set, emulate the mobile user's video watching behaviors by using the user gesture traces, and produce the average results. Each video watching event consumes 50 videos from popular sections and 150 videos from recent sections. We assume that, during the watch-time only 3G links are available with 1 MB/s bandwidth, and WiFi is available for pre-fetching once every hour. The application local storage is set as 100 MB, which is about half of the total amount of video consumption in one watching event, given the file size of an instant video clip is around 1 MB.
The monetary cost model is 1 dollar per 100 MB traffic. The energy model is adopted from \cite{balasubramanian2009energy}.

\subsection{Impacts of $p/q$ and $p/r$ Ratios}

\begin{figure}[tbp]
\centering
\subfigure[\small Impact of different $p/q$ values on efficiency and playback discontinuity ($r=0$)]{
	\includegraphics[width=0.2225\textwidth]{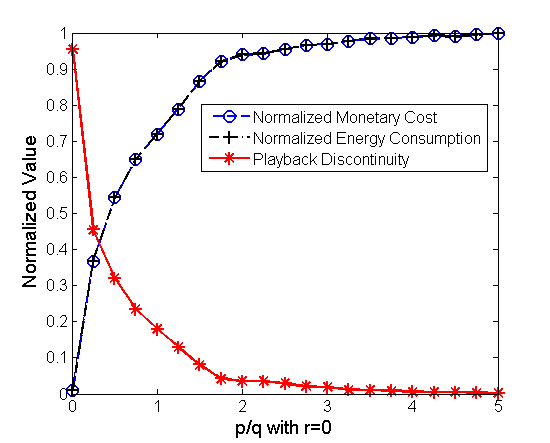}
\label{fig:para1}
}
\subfigure[\small Impact of different $p/r$ values on efficiency and playback discontinuity ($q=0$)]{
	\includegraphics[width=0.2225\textwidth]{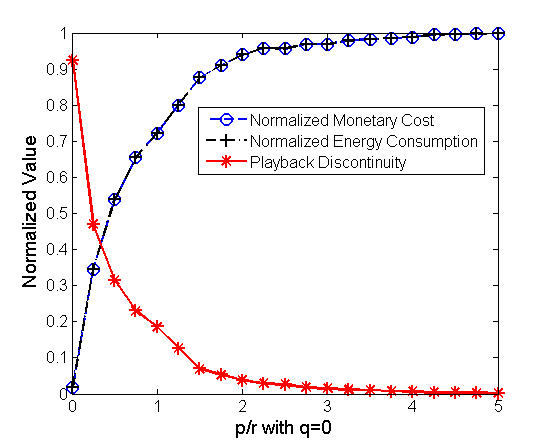}
\label{fig:para2}
}
\caption{Impacts of $p$, $q$ and $r$} \vspace{-0.4cm}
\label{fig:para}
\end{figure}

We first investigate the impact of different parameter settings by varying the values of $p$, $q$ and $r$. As we have three goals in our objective function, two of which are designed for efficiency, we vary the ratio of $p/q$ with $r=0$, and vary the ratio $p/r$ with $q=0$, respectively. The results are shown in Figures~\ref{fig:para1} and \ref{fig:para2}, which demonstrate how the playback discontinuity, the cost efficiency and the energy efficiency change with different parameter values. As $p$, $q$ and $r$ only affect our optimization goal during the watch-time, we disable pre-fetching in this experiment, and only focus on studying the proper settings of $p/q$ and $p/r$ for watch-time downloading (WT). Since changing the ratios between $p$, $q$ and $r$ is essentially the trade-off between efficiency and performance, it is not surprising that the plots in Figures~\ref{fig:para1} and \ref{fig:para2} are very similar. Note that, as we adopt both a linear monetary cost model and a linear energy model, the normalized results for monetary cost and energy consumption is nearly the same when pre-fetching has been disabled. When $p/q$ ($p/r$) is small ($<0.5$), very limited number of videos have been downloaded and the video playback is severely affected, so as to reduce the monetary cost (energy consumption). On the other hand, if $p/q$ ($p/r$) increases to a certain degree ($>3$), the video watching experience is optimized but results in a much higher monetary cost (energy consumption). Moreover, there is a small interval near 1.5 in the figure, in which the playback quality is acceptable with a relatively good cost efficiency (energy efficiency). We thus pick $p/q=1.5$ and $p/r=1.5$ as the default setting for the remaining evaluation. It is worth noting that where the best trade-off achieves for the ratio of $p/q$ ($p/r$) may change when different monetary cost model and energy consumption model are applied.
Figure~\ref{fig:para} implies that we can have close controls over the performance of our watch-time scheduling scheme by varying the values of $p/q$ and $p/r$, which can be flexible and adjusted based on the user needs in real-time. For example, if a user watches Vine videos at home or in office, s/he probably does not care about the cost and wants to maximize the watching experience by setting $p/q$ and $p/r$ to be large values; while, for a user who accesses Vine service with a limited data plan or a low battery, small values should be assigned to $p/q$ and $p/r$, so that the monetary cost/energy consumption for the video watching can be reduced.
%

\subsection{Impact of Pre-fetching Aggressiveness $\alpha$}

\begin{figure}[tbp]
\centering
\subfigure[\small Relative monetary cost efficiency and playback discontinuity of WT+PF normalized by WT]{
	\includegraphics[width=0.2225\textwidth]{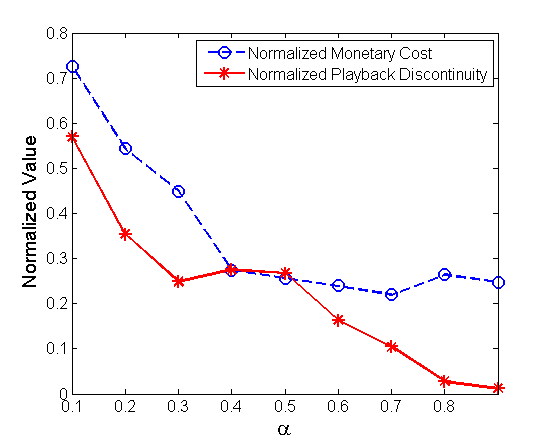}
\label{fig:alpha1}
}
\subfigure[\small Relative energy efficiency of WT+PF normalized by WT]{
	\includegraphics[width=0.2225\textwidth]{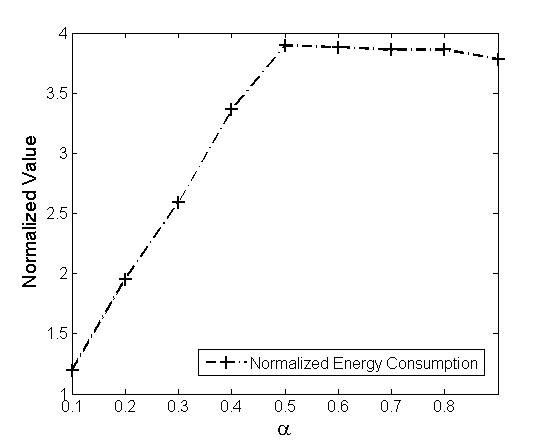}
\label{fig:alpha2}
}
\caption{Impact of $\alpha$} \vspace{-0.4cm}
\label{fig:alpha}
\end{figure}

We next study how the overall performance of our proposed solution changes with $\alpha$, which, as discussed in the previous section, represents how aggressively our pre-fetching (PF) acts. We vary the value of $\alpha$ from 0.1 to 0.9, and compare the major performance metrics of our proposed solution with pre-fetching (WT+PF) and those of watch-time downloading without pre-fetching (WT). We normalize the performance metrics of WT+PF by using WT as the baseline, and plot them in Figure~\ref{fig:alpha}. As shown in Figure~\ref{fig:alpha1}, when $\alpha$ grows large, the playback discontinuity decreases gradually, while the monetary cost first decreases and then becomes stable.
The reason is that when the pre-fetching becomes more and more aggressive, more videos that will be watched are downloaded through WiFi, thus the monetary cost is saved; however, as one cannot accurately predict which video (and which portion of it) will be consumed and the application local storage is also limited, there are still some parts of videos need to be downloaded during the watch-time.
At the same time, as more and more videos are pre-fetched, some of the videos, which originally will not be downloaded during watch-time according to the performance-efficiency trade-off, are pre-fetched and make the playback discontinuity continue to decrease.
On the other hand, Figure~\ref{fig:alpha2} shows that the normalized energy consumption keeps growing (from 1.2x to 3.8x) until $\alpha$ reaches 0.5, where the application local storage is used up and limits the amount of pre-fetched videos even if $\alpha$ gets larger than 0.5 (and thus the energy consumption). It is worth noting that here the comparison is only between WT+PF and WT, not WT+PF and SeqD (WT+PF and NextD), where, as will be discussed in the next subsection, the energy saving can achieve as much as over 90\%.
Figure~\ref{fig:alpha} suggests us setting $\alpha=0.2$ as the default value, where WT+PF consumes about 2x energy of WT to save around 45\% monetary cost and improve the playback discontinuity by over 60\%.

\subsection{Performance Enhancement}

Mobile users can access the instant video clip sharing services under any circumstances. We next show the overall performance gain of our proposed solution (WT+PF) compared to Sequential Downloading (SeqD) and Next-one Downloading (NextD) under different operating conditions. To simulate different levels of connection quality, we vary the downloading bandwidth from 150 KB/s to 3 MB/s.
Figure~\ref{fig:b1} shows the results with $p/q=1.5$ and $p/r=1.5$, which plots the playback discontinuity in Figure~\ref{fig:b1pd}, and the normalized monetary cost and the normalized energy consumption in Figure~\ref{fig:b1nc}.
As we keep $\alpha$ fixed in this experiment, the monetary cost and the energy consumption exhibit similar patters due to their linear models, and thus we see overlapping plots for NextD in Figure~\ref{fig:b1nc}.
As SeqD naively downloads videos and disregards their playbacks, which no doubt introduces the highest monetary cost/energy consumption, we use it as the baseline, and normalize the costs of the other two approaches.
As shown in the figure, our proposed approach (WT+PF) has a very stable performance with different downloading bandwidths, in terms of both playback discontinuity and cost efficiency. On the contrary, the playback discontinuity of NextD and that of SeqD increasing dramatically when the bandwidth is low ($<$ 0.5 MB/s). The reason is that, as the bandwidth becomes lower, it is more and more difficult for these two downloading schemes to finish each downloading before the playback, while our proposed approach can still keep the playback discontinuity at a low level by smartly managing the downloading according to user actions and efficient pre-fetching.
On the other hand, the monetary cost/energy consumption of NextD quickly increases after reaching the minimal at 0.5 MB/s.
The reason for both NextD and SeqD introducing high monetary cost/energy consumption is that, when the bandwidth is low, both NextD and SeqD keep the downloading link busy almost all the time; whereas if the bandwidth becomes high enough, both of the downloading schemes attempt to download all the videos, especially given the fact that, both NextD and SeqD disregard the video watch duration and download the entire video instead of part of it. Figure~\ref{fig:b1nc} shows that our proposed approach can save at least over 40\% monetary cost and 30\% energy consumption, and the cost/energy saving under high bandwidths can be higher than 90\%.
This result implies that our solution can efficiently offload the traffic from the watch-time downloading to the pre-fetching, and reduce a significant amount of unnecessary watch-time downloadings, as it only downloads the part of the video that will be watched given the input user actions. One may notice that the other two caching schemes may achieve the similar performance to our proposed approach in terms of playback discontinuity, when the bandwidth is high enough so that most downloadings can be simply finished before the playback deadline. This is because the setting of $p/q$ and $p/r$ asks for a balance between cost efficiency and playback discontinuity.
As previously discussed, if a user is more aggressive on the video watching experience, s/he can further increase of the ratio of $p/q$ and $p/r$, e.g., to $p/q=3.5$ and $p/r=3.5$. We plot the corresponding results in Figure~\ref{fig:b2}, which shows that our solution can always achieve the best playback discontinuity at diverse bandwidths, and still with huge amounts of (over 90\%) cost/energy savings. To sum up, the results demonstrate that, compared to the other two caching schemes, our proposed approach achieves much more stable performance, which is a crucial requirement for practical designs.

\begin{figure}[tbp]
\centering
\subfigure[\small Playback discontinuity]{
	\includegraphics[width=0.2225\textwidth]{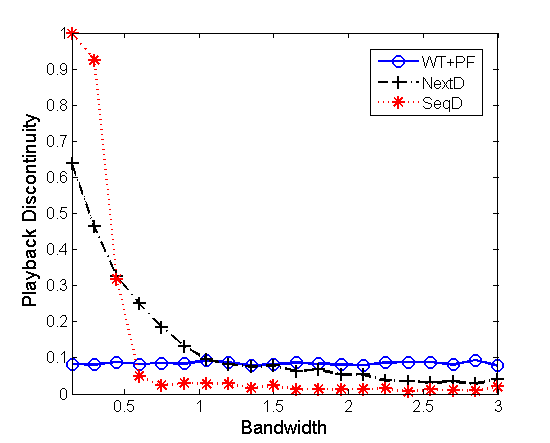}
\label{fig:b1pd}
}
\subfigure[\small Monetary and energy costs normalized by the baseline results of SeqD]{
	\includegraphics[width=0.2225\textwidth]{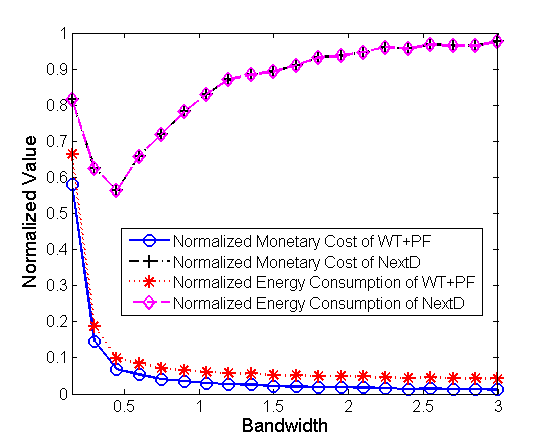}
\label{fig:b1nc}
}
\caption{Impact of downloading bandwidth when $p/q=1.5$, $p/r=1.5$} \vspace{-0.4cm}
\label{fig:b1}
\end{figure}

\begin{figure}[tbp]
\centering

\subfigure[\small Playback discontinuity]{
	\includegraphics[width=0.2225\textwidth]{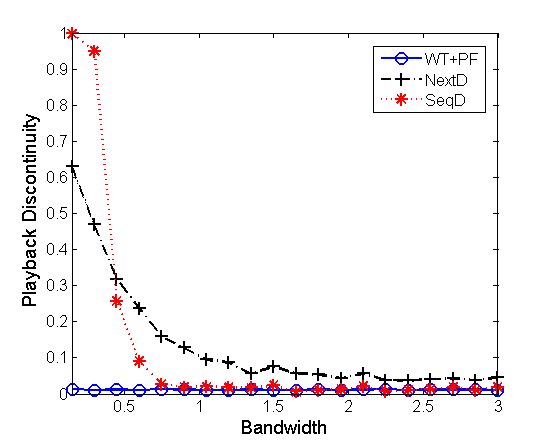}
\label{fig:b2pd}
}
\subfigure[\small Monetary and energy costs normalized by the baseline results of SeqD]{
	\includegraphics[width=0.2225\textwidth]{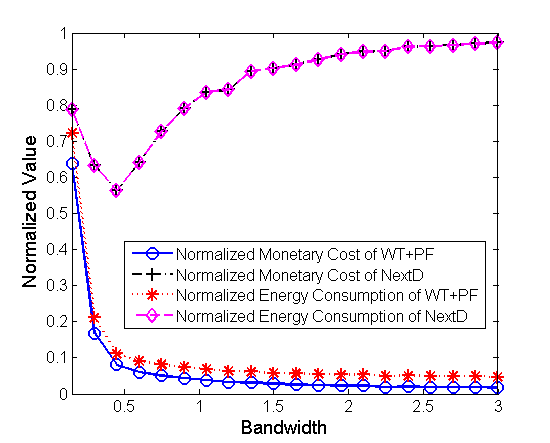}
\label{fig:b2nc}
}
\caption{Impact of downloading bandwidth when $p/q=3.5$, $p/r=3.5$} \vspace{-0.4cm}
\label{fig:b2}
\end{figure}

\section{Further Discussion} \label{sec:fd}

In this section, we briefly discuss several open issues that are worth further exploring. The increasing availability of various sensors integrated in mobile devices provides new opportunities for understanding the operating conditions of users and the surrounding environment. First, extracting patterns from mobile users' daily activities can assist us to find proper locations for pre-fetching. Specifically, for a mobile user, there are a number of locations that s/he would stay as a daily routine, such as office, classroom and home. These routine locations, which normally provide dedicated WiFi connections, can act as regular pre-fetching sites, as the WiFi access is normally free and users may also be able to charge their mobile devices at these places. Besides the free daily routine WiFi connections, mobile users are also exposed to unforeseen wireless networks from time to time (e.g., when having coffee at a cafe with free WiFi), where opportunistic pre-fetching can be performed.

Moreover, understanding and predicting the operating conditions allows us to make full use of the opportunities that would be normally ignored by mobile users. Existing works have been done on energy-efficient mobile context sensing \cite{lu2010jigsaw,nath2012ace}. For example, Acquisitional Context Engine \cite{nath2012ace} infers user's current context by dynamically learning relationships among various context attributes. One simple rule that has been learned by ACE can be: if the user is static, WiFi is connected, and the phone is muted, then the user is probably having a meeting, in which case opportunistic pre-fetching can be done without affecting other applications. Further integrating such sensing abilities with a human mobility model based on periodic travels \cite{cho2011friendship} can bring us the chances to make our proposed approach even smarter, i.e, automatically adjusting parameter values of $p$, $q$ and $r$ according to the currently obtained operating context.

Finally, our work has also touched the interests of some other research fields such as popularity prediction and human-computer interaction. A number of typical solutions of popularity prediction for online contents have been widely studied in the literature, most of which focus on predicting the trend based on time series with regression models \cite{szabo2010predicting,wang2012guiding} and classification models \cite{shamma2011viral,yang2011patterns}. On the other hand, measurement studies \cite{cheng2008statistics,li2012video} also show that the related statistics of video sharing services may be well fit by certain distributions. In our work, video popularity is used to estimate the level of user interests in the future accesses. Although numerous papers have studied how users express interests by examining and understanding various user behaviors \cite{white2009predicting,banerjee2009user}, we may push it one step further -- predicting future user behaviors based on the potential user interests, where machine learning techniques can be applied.

\section{Conclusion} \label{sec:con}
In this paper, we presented the first initial study on the new
generation of instant video clip sharing services enabled by mobile
platforms and explored the potentials for its further enhancement. Taking Vine as an
example, we closely investigated its mobile user interface,
identified and characterized the unique watching behaviors of the
new generation mobile video sharing services, namely, batch view,
passive view and screen scrolling. We formulated a generic scheduling
problem to maximize the viewing experience as well as the
efficiency on the monetary and energy costs. We showed that the formulated generic problem is NP-complete and to better solve it, we further divided the problem
into two subproblems, specifically, the pre-fetching scheduling
problem and the watch-time download scheduling problem, conquered them
separately and then developed a general solution for the generic
problem. Using extensive simulations driven by the real-world traces, we showed that our solution can significantly improve
the viewing experience while still keeping both the monetary
and energy costs relatively low.

\bibliographystyle{IEEEtran}
\bibliography{IEEEabrv,references}

\end{document}